\title{3SUM in Preprocessed Universes: Faster and Simpler}
\newcommand{\Ot}{\widetilde{O}}
\newcommand{\E}{\mathbb{E}}
\newcommand{\tpmod}[1]{\mkern 8mu({\operator@font mod}\mkern 6mu#1)}
\begin{document}

\maketitle

\begin{abstract}
We revisit the 3SUM problem in the \emph{preprocessed universes} setting. We present an algorithm that, given three sets $A$, $B$, $C$ of $n$ integers, preprocesses them in quadratic time, so that given any subsets $A' \subseteq A$, $B' \subseteq B$, $C' \subseteq C$, it can decide if there exist $a \in A'$, $b \in B'$, $c \in C'$ with $a+b=c$ in time $O(n^{1.5} \log n)$.

In contrast to both the first subquadratic $\Ot(n^{13/7})$-time algorithm by Chan and Lewenstein (STOC 2015) and the current fastest $\Ot(n^{11/6})$-time algorithm by Chan, Vassilevska Williams, and Xu (STOC 2023), which are based on the Balog--Szemer\'edi--Gowers theorem from additive combinatorics, our algorithm uses only standard 3SUM-related techniques, namely FFT and linear hashing modulo a prime. It is therefore not only faster but also simpler.

Just as the two previous algorithms, ours not only decides if there is a single 3SUM solution but it actually determines for each $c \in C'$ if there is a solution containing it. We also modify the algorithm to still work in the scenario where the set $C$ is unknown at the time of preprocessing. Finally, even though the simplest version of our algorithm is randomized, we show how to make it deterministic losing only polylogarithmic factors in the running time.
\end{abstract}
 
\section{Introduction}

In the 3SUM problem we are given three sets $A$, $B$, $C$ of $n$ integers, and we have to decide if there exists a triple $(a, b, c) \in A \times B \times C$ such that $a+b = c$. It is one of the three central problems in find-grained complexity, alongside APSP and SAT~\cite{VassilevskaWilliams18}. A standard textbook approach solves 3SUM in quadratic time, and the fastest known algorithm~\cite{BaranDP08} offers merely a factor of $\frac{\log^2 n}{(\log \log n)^2}$ improvement. It is conjectured that there is no algorithm for 3SUM running in time $O(n^{2-\varepsilon})$, for any $\varepsilon > 0$, and conditional lower bounds for numerous problems have been shown under this assumption (see, e.g.,~\cite{VassilevskaWilliams18,FischerK024}).

In order to better understand the complexity of 3SUM, it is instructive to study variants of 3SUM that do admit strongly subquadratic algorithms. For instance, a~folklore algorithm based on the Fast Fourier Transform (FFT) works in time $O(n + U \log U)$ in the case where $A, B, C \subseteq [U] := \{0, 1, \ldots, U-1\}$, i.e., this algorithm is subquadratic when $U = O(n^{2-\varepsilon})$, for $\varepsilon > 0$.

In this paper we focus on another variant, called \emph{3SUM in preprocessed universes}, first proposed by Bansal and Williams~\cite{BansalW12}. In this variant, we are first given three sets $A$, $B$, $C$, and we can preprocess them in polynomial (ideally, near-quadratic) time, so that later for any given subsets $A' \subseteq A$, $B' \subseteq B$, $C' \subseteq C$, we can quickly decide if there exist $a \in A'$, $b \in B'$, $c \in C'$ such that $a+b=c$.

Chan and Lewenstein~\cite{ChanL15} showed how to answer such queries in $\Ot(n^{13/7})$ time\footnote{We use the standard $\Ot$ notation to suppress polylogarithmic factors.}. Their algorithm is based on the Balog--Szemer\'edi--Gowers (BSG) theorem from additive combinatorics, and is therefore quite complex.

More recently, Chan, Vassilevska Williams, and Xu~\cite{ChanWX23} proved what used to be a corollary of the BSG theorem in a more direct way, and as a result they obtained an improved running time of $\Ot(n^{11/6})$. Both algorithms~\cite{ChanL15,ChanWX23} use randomized\footnote{\cite{ChanL15} also gave a deterministic algorithm, but with preprocessing time increased to $\Ot(n^\omega)$, where $\omega$ denotes the matrix multiplication exponent.~\cite{ChanWX23} gave a deterministic algorithm too, but with query time increased to $O(n^{1.891})$, using bounds for rectangular matrix multiplication from~\cite{LeGallU18}.} preprocessing running in expected time $\Ot(n^2)$.

\subsection{Our results}
\label{sec:ourresults}

Our main result is a randomized $O(n^{1.5} \log n)$-time algorithm for 3SUM in preprocessed universes (with quadratic time preprocessing). Besides being faster than the previous algorithms, ours is also arguably simpler: we use only standard 3SUM-related tools, namely FFT and linear hashing modulo a prime. The algorithm is so simple that we actually sketch it in the following paragraph (we defer formal proofs to Section~\ref{sec:knownc}).

\paragraph{The algorithm.}
In the preprocessing phase, we pick a random prime $m \in [n^{1.5}, 2\cdot n^{1.5})$, and compute the set $F$ of all \emph{false positives} modulo $m$, i.e., triples $(a,b,c) \in A \times B \times C$ that are \emph{not} 3SUM solutions yet $a + b \equiv c \pmod{m}$. This can be done in time $O(n^2 + |F|)$, and by a standard argument $\E[|F|] \leqslant O(n^{1.5} \log n)$.

To answer a query, we first use FFT to compute, in time $O(m \log m)$, the number of 3SUM solutions modulo $m$, i.e., $\#\set{(a, b, c) \in A' \times B' \times C' \mid a + b \equiv c \pmod{m}}$. Then, we iterate over the list $F$ in order to compute, in time $O(|F|)$, the number of false positives present in the current instance, i.e., $|F \cap (A' \times B' \times C')|$. Finally, we subtract the two numbers, and answer ``yes'' if the result is nonzero.

\vskip 1em
\noindent
Just as the previous two algorithms~\cite{ChanL15,ChanWX23}, we can output in the same running time not only a single bit, indicating whether there is a 3SUM solution; rather, we can output $|C'|$ bits, indicating for each $c \in C'$ if there exist $a \in A'$, $b \in B'$ with $a+b=c$.\footnote{In literature this more general problem is sometimes called 3SUM\textsuperscript{+}~\cite{ChanL15} or All-Numbers 3SUM~\cite{ChanWX23}.} This is because both the number of solutions modulo $m$ and the number of false positives can easily be counted separately for each~$c \in C'$.

\paragraph{What if the set $C$ is unknown?}

Chan and Lewenstein~\cite{ChanL15} considered also a~variant of the problem where only the sets $A$ and $B$ are given for preprocessing, and hence the set $C'$ can contain any $n$ integers. For this variant they gave an algorithm with query time $\Ot(n^{19/10})$, i.e., a bit slower than in the known $C$ variant.  The improved algorithm of Chan, Vassilevska Williams, and Xu~\cite{ChanWX23} applies to the unknown~$C$ variant without any slowdown, i.e., the query time is~$\Ot(n^{11/6})$.

In Section~\ref{sec:unknowncrnd} we give an algorithm for that variant too. It has the same quadratic preprocessing time and $O(n^{1.5} \log n)$ query time as our algorithm for known $C$ from Section~\ref{sec:knownc}, and it is still simple. A notable difference is that the algorithm for known $C$ has space complexity $O(|F|) \leqslant O(n^{1.5} \log n)$ while the algorithm for unknown $C$ has space complexity $O(n^2)$.

\paragraph{Deterministic algorithms.}
Both our algorithms (for the variants with known $C$ and unknown $C$, presented in Sections~\ref{sec:knownc} and~\ref{sec:unknowncrnd}, respectively) use randomization in their simplest versions. However, both can be made deterministic with only polylogarithmic losses in the running times.

Our algorithm for known $C$ from Section~\ref{sec:knownc} can be derandomized by an essentially black-box application of a tool developed originally with the purpose of derandomizing fine-grained reductions from the 3SUM problem~\cite{FischerK024}. This tool is basically a~deterministic algorithm for selecting a (composite) modulus that gives a similar number of false positives as a randomly selected prime of the same magnitude. At the end of Section~\ref{sec:knownc} we comment on how it can be applied to derandomize our algorithm losing only a polylogarithmic factor in the running time.

In Section~\ref{sec:unknowncdet} we give a deterministic variant of our algorithm for unknown $C$ from Section~\ref{sec:unknowncrnd}, also slower by a polylogarithmic factor. Here, too, we use ideas from~\cite{FischerK024}, but we cannot apply them in a black-box manner, and we need an additional insight that while a single modulus might not suffice we can use $\log n$ moduli instead. This is the most technical part of our paper, but it is still relatively simple and only uses the same basic tools, namely FFT and bounds on the number of false positives.

See Table~\ref{tab:algorithms} for a comparison of the previous results with ours.

\renewcommand{\arraystretch}{1.3}
\begin{table}
\centering\small
\begin{tabular}{@{}llllcc@{}}
\toprule
Reference & Preprocessing & Query & Space & Unknown $C$ & Deterministic \\
\midrule
\cite{ChanL15} & $\Ot(n^2)$ & $\Ot(n^{13/7})$ & $O(n^{13/7})$ & -- & -- \\
\cite{ChanL15} & $\Ot(n^2)$ & $\Ot(n^{19/10})$ & $O(n^2)$ & \checkmark
 & -- \\
\cite{ChanL15} & $\Ot(n^\omega)$ & $\Ot(n^{13/7})$ & $O(n^{13/7})$ & -- & \checkmark
 \\
\cite{ChanL15} & $\Ot(n^\omega)$ & $\Ot(n^{19/10})$ & $O(n^2)$ & \checkmark & \checkmark
 \\
\cite{ChanWX23} & $\Ot(n^2)$ & $\Ot(n^{11/6})$ & $O(n^2 \log n)$ & \checkmark
 &  -- \\
\cite{ChanWX23} & $\Ot(n^2)$ & $O(n^{1.891})$ & $O(n^{1.891})$ & --
 & \checkmark
 \\
\textbf{Theorem~\ref{thm:knownc}} & $O(n^2)$ & $O(n^{1.5} \log n)$ & $O(n^{1.5} \log n)$ & -- & -- \\
\textbf{Theorem~\ref{thm:unknowncrnd}} & $O(n^2)$ & $O(n^{1.5} \log n)$ & $O(n^2)$ & \checkmark
 & --  \\
\textbf{Theorem~\ref{thm:unknowncdet}} & $O(n^2 \log n)$ & $O(n^{1.5} \log^3 n)$ & $O(n^2 \log n)$ & \checkmark
 & \checkmark \\
\bottomrule
\end{tabular}
\caption{Algorithms for 3SUM in preprocessed universes.}
\label{tab:algorithms}
\end{table}

\paragraph{Model of computation and input range.}

We work in the standard word RAM model with $O(\log n)$-bit words. It is common to assume that the input numbers fit a~single word, i.e., they are bounded by $n^{O(1)}$. We adopt this assumption, mainly for the sake of simplicity of presentation. We note that if the input numbers were instead bounded by an arbitrary universe size $U$ (but arithmetic operations were still assumed to take constant time), our algorithms would maintain essentially the same running time apart from all the $\log n$ factors replaced by $\log U$.

Let us also recall that, for the 3SUM problem specifically, one can assume using a by now standard argument (implicit in~\cite{BaranDP08}; formal proof, e.g., in~\cite[Lemma B.1]{AbboudLW14}) that the input numbers are bounded by $O(n^3)$, as long as randomization is allowed. This reduction applies also to the preprocessed universes variant, but using it would turn our Las Vegas algorithms into Monte Carlo ones and would likely involve an additional $\log n$ factor to guarantee the correct answer with high probability. For deterministic algorithms a similar reduction was given only recently~\cite{FischerK024}, but it does not seem to apply (at least directly) to the preprocessed universes variant, and it involves a $\operatorname{polylog} U$ factor anyway.

\subsection{Other related work}

Goldstein et al.~\cite{GoldsteinKLP17}, Golovnel et al.~\cite{GolovnevGHPV20}, and Kopelowitz and Porat~\cite{KopelowitzP19} studied a different type of preprocessing for 3SUM, called \emph{3SUM indexing}. In their variant, two sets $A$ and $B$ are given for preprocessing, and each query specifies a single integer $c$ and asks if there exist $a \in A$ and $b \in B$ with $a+b=c$. 
Hence, the two main aspects that differentiates this variant from the variant we study are that the query concerns (1) the original sets $A$ and $B$ (and not their subsets), and (2) just a single target $c$ (and not a set of up to $n$ targets $C'$). Also, the main focus of research on 3SUM indexing is the trade-off between the space complexity and the query time.

\paragraph{Reals.}

While in this paper we focus on the word RAM model, which is the most common choice in fine-grained complexity, we note that the 3SUM problem was originaly introduced in the real RAM model~\cite{GajentaanO95}, because its first applications were conditional lower bounds for problems in computational geometry, where the real RAM model prevails. It is conjectured that also in the real RAM there is no strongly subquadratic algorithm for 3SUM. The mildly subquadratic algorithm for 3SUM due to Baran, Demaine, and P{\u{a}}tra{\c{s}}cu~\cite{BaranDP08} works only in the word RAM model, but a separate line of research~\cite{GronlundP18,Chan20}, using very different techniques, led to similar polylogarithmic improvements also in the real RAM. Recently, Fischer~\cite{Fischer25} showed that the Chan--Lewenstein $\Ot(n^{13/7})$-time algorithm for 3SUM in preprocessed universes~\cite{ChanL15} can be adapted to work in the real RAM model. It seems to remain a difficult open question whether the subsequent improvement to time $\Ot(n^{11/6})$~\cite{ChanWX23} or our current improvement to time $O(n^{1.5} \log n)$ also can be mirrored in the real RAM.

\section{Preliminaries}

\subsection{Notation}

We use $[n] := \set{0, 1, \ldots, n-1}$, and $\Ot(T) := T \cdot (\log T)^{O(1)}$. For two sets of numbers $A$~and $B$ we define their \emph{sumset} to be $A + B := \set{a+b \mid (a, b) \in A \times B}$.

\subsection{Tools}
Throughout the paper we use the following folklore algorithm for sumset computation modulo a small number.

\begin{lemma}[Sumset computation using FFT]\label{lem:fft}
Given two sets $A$, $B$ of $n$ integers and a~positive integer $m$, one can compute in time $O(n + m \log m)$ the multiset $(A + B) \bmod m$, represented as its multiplicity vector $(v_0, v_1, \ldots, v_{m-1}) \in \mathbb{Z}^m$, where for every $i \in [m]$,
\[v_i := \#\set{(a, b) \in A \times B : a + b = i \tpmod{m}}.\]
\end{lemma}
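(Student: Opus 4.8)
The plan is to reduce the problem to a single convolution (polynomial multiplication) of two length-$m$ vectors, which can be done in $O(m \log m)$ time by the Fast Fourier Transform. First I would reduce all elements of $A$ and $B$ modulo $m$: in $O(n)$ time compute $a \bmod m$ for each $a \in A$ and $b \bmod m$ for each $b \in B$. Using these I build two \emph{indicator-with-multiplicity} vectors $\alpha, \beta \in \mathbb{Z}^m$ defined by $\alpha_j := \#\{a \in A : a \equiv j \pmod m\}$ and $\beta_j := \#\{b \in B : b \equiv j \pmod m\}$; this takes $O(n + m)$ time by bucketing. Note $\sum_j \alpha_j = |A| = n$ and likewise for $\beta$, so the entries are nonnegative integers bounded by $n$, hence each fits in $O(\log n)$ bits and all arithmetic on them is within the word RAM model (intermediate convolution values are bounded by $n^2$, still $O(\log n)$ bits).

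Next I would compute the (linear, i.e.\ non-cyclic) convolution $\gamma := \alpha * \beta \in \mathbb{Z}^{2m-1}$, where $\gamma_k = \sum_{j=0}^{k} \alpha_j \beta_{k-j}$, via FFT in time $O(m \log m)$. Then I fold it modulo $m$ to obtain the desired vector: set $v_i := \gamma_i + \gamma_{i+m}$ for $i \in [m]$ (with $\gamma_{2m-1}$ treated as contributing only to $v_{m-1}$), which costs $O(m)$ additional time. The correctness claim is that for each $i \in [m]$ we have $v_i = \#\{(a,b) \in A \times B : a+b \equiv i \pmod m\}$: indeed $\gamma_k$ counts pairs of residues $(j, k-j)$ summing to exactly $k$, weighted by how many original elements hit those residues, so $\gamma_k$ counts pairs $(a,b)$ with $(a \bmod m) + (b \bmod m) = k$; since $a \bmod m, b \bmod m \in [m]$, their sum $k$ lies in $\{0, \ldots, 2m-2\}$ and $k \equiv i \pmod m$ holds exactly when $k = i$ or $k = i+m$, which is what the folding step aggregates.

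The one subtlety worth addressing carefully — and the closest thing to an obstacle — is the model-of-computation bookkeeping for the FFT: a textbook complex-arithmetic FFT introduces rounding errors, so I would either invoke a number-theoretic transform over a suitable prime field (choosing a prime $p = \Theta(n^{O(1)})$ with $p > n^2$ and $p \equiv 1$ modulo a large power of two, or using three such primes and CRT) so that all arithmetic is exact on $O(\log n)$-bit words, or cite the standard fact that integer convolution of length-$m$ vectors with entries bounded by $\mathrm{poly}(n)$ is computable exactly in $O(m \log m)$ word operations. Everything else is routine: the total running time is $O(n) + O(n + m) + O(m\log m) + O(m) = O(n + m\log m)$, as claimed, and the output is exactly the multiplicity vector $(v_0, \ldots, v_{m-1})$.
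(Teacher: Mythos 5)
Your proof is correct and follows essentially the same route as the paper: you build residue-multiplicity vectors (equivalently, the polynomials $\sum_{a\in A} x^{a \bmod m}$ and $\sum_{b\in B} x^{b\bmod m}$), convolve them with FFT in $O(m\log m)$, and fold the result by adding the coefficients at $i$ and $i+m$. Your extra remark about exact integer arithmetic (NTT or standard exact convolution) is a fine precision the paper leaves implicit, but the argument is the same.
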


\begin{proof}
Create two univariate polynomials $P := \sum_{a \in A}x^{a \bmod m}$ and $Q := \sum_{b \in B}x^{b \bmod m}$. Their degrees are at most $m - 1$. Then, use the Fast Fourier Transform (FFT) in order to compute the product polynomial $PQ$ in time $O(m \log m)$. Finally, for every $i \in [m]$, set $v_i$ to be the sum of coefficients in front of $x^i$ and $x^{m+i}$ in $PQ$.
\end{proof}

We also repeatedly use the following folklore fact about the number of false positives produced by hashing modulo a prime number.

\begin{lemma}[False positives modulo a prime]\label{lem:false}
Let $T \subseteq [U]^3$ be a set of triples of non-negative integers bounded by $U$ with no 3SUM solution, that is, $a + b \neq c$ for all $(a,b,c) \in T$. Let $r$ be an integer, and let $p$ be a prime number sampled uniformly at random from the range $[r, 2r)$. Then the expected number of false positives is bounded by
\[\E[\#\set{(a,b,c) \in T : a + b \equiv c \tpmod{p}}] \leqslant O\left(\frac{|T|\log U}{r}\right).\]
\end{lemma}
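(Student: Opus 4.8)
The plan is to bound, for a fixed triple $(a,b,c) \in T$, the probability that $a+b \equiv c \pmod p$ over the random choice of the prime $p \in [r,2r)$, and then sum over all triples by linearity of expectation. Since $(a,b,c)$ is not a 3SUM solution, the integer $d := a+b-c$ is nonzero, and $|d| \leqslant 3U$. The event $a+b \equiv c \pmod p$ is exactly the event $p \mid d$. So I need to show that a random prime in $[r,2r)$ divides a fixed nonzero integer of magnitude at most $3U$ with probability $O\big(\tfrac{\log U}{r}\big)$.

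The key step is a counting argument: how many primes in $[r,2r)$ can divide a fixed nonzero $d$ with $|d| \leqslant 3U$? Each such prime is at least $r$, so $d$ can have at most $\log_r(3U) = O\big(\tfrac{\log U}{\log r}\big)$ distinct prime factors in that range (in fact in total). On the other hand, by the prime number theorem (or Chebyshev-type bounds, which suffice and are elementary), the number of primes in $[r,2r)$ is $\Theta\big(\tfrac{r}{\log r}\big)$. Hence the probability that a uniformly random prime from $[r,2r)$ divides $d$ is at most
\[
\frac{O\big(\tfrac{\log U}{\log r}\big)}{\Theta\big(\tfrac{r}{\log r}\big)} = O\!\left(\frac{\log U}{r}\right).
\]
Summing this bound over all $|T|$ triples via linearity of expectation gives $\E[\#\{\ldots\}] \leqslant O\big(\tfrac{|T|\log U}{r}\big)$, as claimed. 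One should note the implicit mild assumption $r \geqslant 2$ (and really $r$ large enough that $[r,2r)$ contains a prime, which holds by Bertrand's postulate) so that sampling a uniformly random prime from the range is well-defined; for the regime of interest $r = \Theta(n^{1.5})$ this is not an issue, and the $\log r$ factors cancel cleanly.

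I do not anticipate a serious obstacle here — this is a textbook argument. The only mild subtlety is making sure the two estimates (number of prime divisors of $d$ in the range, and number of primes in the range) are both handled with the same flavor of bound so the $\log r$ factors cancel exactly, rather than leaving a stray $\log r$ that would weaken the bound to $O\big(\tfrac{|T|\log U \log r}{r}\big)$. Using Chebyshev's bound $\pi(2r) - \pi(r) = \Theta(r/\log r)$ and the trivial "at most $\log_r(3U)$ distinct prime factors $\geqslant r$" bound makes the cancellation transparent, so I would phrase it exactly that way.
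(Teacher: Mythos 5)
Your proposal is correct and follows essentially the same argument as the paper: bound the number of prime divisors of the nonzero difference $a+b-c$ lying in $[r,2r)$ by $\log_r(O(U))$, divide by the $\Theta(r/\log r)$ primes in that range so the $\log r$ factors cancel, and sum over $T$ by linearity of expectation. The only cosmetic difference is your bound $|a+b-c| \leqslant 3U$ versus the paper's $2U$, which is immaterial.
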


\begin{proof}
Let $P$ denote the set of primes in the range $[r,2r)$. By the prime number theorem there are $|P| = \Theta(r / \log r)$ of them. For a fixed $(a,b,c) \in T$, at most $\log_r(2U)$ of those primes divide $|a+b-c|$. This is because each of $a,b,c$ is between $0$ and $U-1$ (inclusive), so $-(U-1) \leqslant a+b-c \leqslant 2U - 2$, and in particular $|a+b-c| \leqslant 2U$. Therefore, the probability that $(a,b,c)$ becomes a false positive is bounded by
\[\operatornamewithlimits{Pr}_{p \sim P}[a + b \equiv c \tpmod{p}] \leqslant \frac{\log_r(2U)}{|P|} = \frac{\log 2U}{|P| \log r}.\]
By the linearity of expectation we get that the expected number of false positives is no greater than the above probability upper bound times $|T|$, i.e., $\frac{|T|\log 2U}{|P| \log r} \leqslant O(\frac{|T|\log U}{r})$.
\end{proof}
        
\section{Algorithm for known \texorpdfstring{$C$}{C}}
\label{sec:knownc}

In this section we present in detail our simplest algorithm, which we already sketch in Section~\ref{sec:ourresults}. The algorithm is randomized, and assumes that all three sets $A$, $B$, $C$ are available for preprocessing.
    
\begin{theorem}\label{thm:knownc}
    There is a randomized Las Vegas algorithm that,
    given three sets $A$, $B$, $C$ of $n$ integers in $[n^{O(1)}]$, can preprocess them in time $O(n^2)$ and space $O(n^{1.5} \log n)$, so that given any subsets $A' \subseteq A$, $B' \subseteq B$, $C' \subseteq C$, it can decide for all $c \in C'$ if there exist $a \in A'$, $b \in B'$ with $a+b=c$ in time $O(n^{1.5} \log n)$.
\end{theorem}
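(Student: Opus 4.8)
The plan is to follow the sketch from Section~\ref{sec:ourresults}, supplying the implementation details and the probabilistic analysis and relying only on Lemmas~\ref{lem:fft} and~\ref{lem:false}. For the preprocessing, I would first sample a uniformly random prime $m \in [n^{1.5}, 2n^{1.5})$; by the prime number theorem this interval contains $\Theta(n^{1.5}/\log n)$ primes, so a random integer in it is prime with probability $\Omega(1/\log n)$ and rejection sampling with a primality test finds such an $m$ in $\operatorname{polylog}(n)$ expected time. Next I would build the false-positive list $F$: bucket the elements of $C$ by residue modulo $m$ into an array of $m$ lists, then loop over all pairs $(a,b) \in A \times B$, scan the bucket of $(a+b) \bmod m$, and append $(a,b,c)$ to $F$ whenever that bucket contains a $c \neq a+b$. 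This costs $O(n^2)$ plus $O(1)$ per bucket entry examined; each examined entry is either an element of $F$ or a genuine solution $(a,b,a+b)$, and there are at most $n^2$ of the latter (one per pair $(a,b)$), so the preprocessing runs in $O(n^2 + |F|)$ time. Applying Lemma~\ref{lem:false} with $T$ the set of all non-solution triples of $A \times B \times C$ (so $|T| \le n^3$), $U = n^{O(1)}$ (hence $\log U = O(\log n)$), and $r = n^{1.5}$ yields $\E[|F|] \le O(n^3 \log U / n^{1.5}) = O(n^{1.5}\log n)$. To promote this expectation into a worst-case space bound — which a clean Las Vegas statement needs — I would abort and resample $m$ as soon as $|F|$ exceeds twice the bound of Lemma~\ref{lem:false}; by Markov's inequality each attempt succeeds with probability $> 1/2$, so $O(1)$ attempts suffice in expectation, each costing $O(n^2)$, and the stored pair $(m, F)$ has $|F| = O(n^{1.5}\log n)$.

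For a query $(A', B', C')$, I would first mark the elements of $A'$, $B'$, $C'$ for $O(1)$-time membership tests (cost $O(n)$), then apply Lemma~\ref{lem:fft} to $A'$, $B'$, $m$ to obtain, in $O(n + m\log m) = O(n^{1.5}\log n)$ time, the vector $(v_0,\ldots,v_{m-1})$ with $v_i = \#\set{(a,b)\in A'\times B' : a+b\equiv i \pmod{m}}$, and set $s(c) := v_{c \bmod m}$ for each $c \in C'$. Then I would scan $F$ once and, for every $(a,b,c) \in F$ with $a \in A'$, $b \in B'$, $c \in C'$, increment a counter $f(c)$, in $O(|F|) = O(n^{1.5}\log n)$ time. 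Finally I report, for each $c \in C'$, that $c$ participates in a solution iff $s(c) - f(c) > 0$. The total query time is $O(n^{1.5}\log n)$.

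Correctness follows by a counting argument per target. Fix $c \in C'$: the number $s(c)$ counts the pairs $(a,b) \in A'\times B'$ with $a+b \equiv c \pmod{m}$, which split into genuine solutions ($a+b=c$) and false positives ($a+b\neq c$); since $F$ is exactly the set of all false-positive triples over $A\times B\times C$, the false positives among those counted by $s(c)$ are precisely the triples $(a,b,c)\in F$ with $a\in A'$, $b\in B'$, of which there are $f(c)$. Hence $s(c)-f(c)$ equals the number of pairs $(a,b)\in A'\times B'$ with $a+b=c$, so the reported bit is always correct, and only the resource usage depends on the random prime — the algorithm is Las Vegas. The only real obstacle I foresee lies in the preprocessing: the bucketing argument that $F$ is assembled in $O(n^2 + |F|)$ time even when there are many genuine solutions (handled by the $\le n^2$ bound above), and the resampling step that turns the expected bound on $|F|$ from Lemma~\ref{lem:false} into the stated worst-case space and query-time guarantees. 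Everything else is a direct combination of the two preliminary lemmas.
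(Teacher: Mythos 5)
Your proposal is correct and follows essentially the same route as the paper's own proof: the same random prime in $[n^{1.5},2n^{1.5})$, the same bucketed construction of the false-positive list $F$ in $O(n^2+|F|)$ time, the same application of Lemma~\ref{lem:false} to bound $\E[|F|]$, the restart-via-Markov trick to make the space and query bounds worst-case, and the same FFT-plus-subtraction query with the identical per-$c$ counting argument. The only differences are cosmetic implementation details (explicit rejection sampling for the prime, marking indices instead of a hash table), which the paper handles in its remarks and footnotes.
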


\paragraph{Preprocessing.}

The preprocessing boils down to picking a prime number $p$ uniformly at random from the range $[n^{1.5}, 2n^{1.5})$, and computing the set of \emph{false positives modulo~$p$}, which we define as
\[F:= \set{ (a,b,c) \in A \times B \times C : a + b \equiv c \tpmod{p} \land a + b \neq c }.\]

The expected number of false positives satisfies $\E[|F|] = O(\frac{n^3 \log n}{n^{1.5}}) = O(n^{1.5} \log n)$ as it follows from Lemma~\ref{lem:false} applied to the set of triples that are \emph{not} 3SUM solutions, i.e., $T=\set{(a,b,c) \in A \times B \times C : a + b \neq c}$, $|T| \leqslant n^3$.

Last but not least, let us argue that the set $F$ can be constructed in time $O(n^2+|F|)$. We create a length-$p$ array of (initially empty) lists, and for each number $c \in C$ we add it to the list stored at index $c \bmod p$ in the array. Then, for each of the $n^2$ pairs $(a,b) \in A \times B$, we go through the list at index $(a + b) \bmod p$; at most one element of this list is equal to $a+b$, and all the others constitute (together with $a$ and $b$) false positives, which we add to $F$.
Summarizing, the preprocessing takes time ${O}(n^2 + |F|)$, which is in expectation $O(n^2)$.

\paragraph{Query.}
First, we use FFT (see Lemma~\ref{lem:fft}) to compute the multiset $(A'+B') \bmod p$ in time $O(p \log p)$.
Then, we create a hash table $H$, which initially stores, for every $c \in C'$, the number of 3SUM solutions \emph{modulo $p$} involving $c$. We obtain this number in constant time from the output of FFT as the multiplicity of $c \bmod p$ in the multiset $(A'+B') \bmod p$. That is, we have
\begin{align*}
H[c] &= \#\{(a,b) \in A' \times B' : a + b \equiv c \tpmod{m}\} \\[.3em]
& = \#\smash{\underbrace{\{(a,b) \in A' \times B' : a+b=c\}}_{\text{true solutions}}} \\[.3em]
& \, + \hspace{12em}  \#\underbrace{\{(a,b) \in A' \times B' : a + b \equiv c \tpmod{m} \land a + b \neq c\}}_{\text{false positives}}.
\end{align*}

Our only remaining goal is to subtract the false positives so that we end up with the correct counts of solutions. To this end, we iterate over $F$. For each triple $(a,b,c) \in F$, we check whether $a \in A' \land b \in B' \land c \in C'$, and if it is the case then we decrement $H[c]$ by one.
Finally, for each $c \in C'$ we answer ``yes'' if $H[c] > 0$ and ``no'' otherwise.

Summarizing we answer the query in time $O(n + p \log p + |F|)$, which is $O(n^{1.5} \log n)$ in expectation.

\paragraph{Remarks.}
As presented above, the preprocessing and query time as well as space usage are bounded in expectation. However, we can slightly alter the preprocessing, so that as soon as we notice that the size of $F$ exceeds twice its expectation we immediately stop and start over with a freshly sampled prime. By Markov inequality this happens with probability at most $1/2$, so the expected number of such rounds is $1+1/2+1/4+\cdots=2$. Conveniently, now at the end of the preprocessing the set $F$ is \emph{guaranteed} to contain at most $O(n^{1.5} \log n)$ elements (not only in expectation), and the same guarantee applies therefore to the query time\footnote{The expectation in the query time also comes from the hash table, but we use it only for the sake of simplicity, and it can be avoided by indexing the elements of $A$, $B$, $C$ with integers from $1$ to $n$, which only requires additional time $O(n \log n)$.} and space usage.

A naive way to derandomize also the preprocessing (and therefore the whole algorithm) is to simply try out all primes in $P$ instead of picking a random one. That would increase the preprocessing time to $O(|P| \cdot n^2) = O(n^{3.5}/\log n)$, but the query time would remain unchanged.
A more efficient way is to use (a slight adaptation of) the deterministic algorithm of~\cite[Lemma~11]{FischerK024}, which can find in quadratic time a composite modulus $m$, being a product of three primes from the range $[\sqrt{n}, 2\sqrt{n})$ and generating only $O(n^{1.5} (\log n)^3)$ false positives. We do not elaborate on it further because the resulting deterministic algorithm for known $C$ would be subsumed (in all aspects but simplicity) by our deterministic algorithm for unknown $C$, which we present in Section~\ref{sec:unknowncdet}.    
    
\section{Randomized algorithm for unknown \texorpdfstring{$C$}{C}}
\label{sec:unknowncrnd}

In this section we show a modified variant of our algorithm from the previous section that does need to know the set $C$ at the time of preprocessing. The main challenge is that not knowing $C$ we cannot construct the set of false positives $F$. Instead, we prepare a simple data structure that allows us to efficiently enumerate the false positives once they become well defined during the query.

\begin{theorem}\label{thm:unknowncrnd}
    There is a randomized Las Vegas algorithm that,
    given two sets $A$, $B$ of $n$~integers in $[n^{O(1)}]$, preprocesses them in time $O(n^2)$, so that given any subsets $A' \subseteq A$, $B' \subseteq B$ and any set of $n$ integers $C'$, it can decide for all $c \in C'$ if there exist $a \in A'$, $b \in B'$ with $a+b=c$ in time $O(n^{1.5} \log n)$.
\end{theorem}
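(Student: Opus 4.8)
The plan is to mimic the known-$C$ algorithm, except that since $C$ is unavailable at preprocessing time we cannot materialize the list $F$ of false positives. The key realization is that, for a \emph{fixed} query target $c$, the pairs $(a,b) \in A \times B$ with $a+b \equiv c \pmod p$ but $a+b \neq c$ are exactly the pairs whose true sum $a+b$ lies in the arithmetic progression $\{c - kp : k \geqslant 1\} \cup \{c + kp : k \geqslant 1\}$ intersected with the possible range $[0, 2n^{O(1)})$ of sums. Since $p \in [n^{1.5}, 2n^{1.5})$ and sums are bounded by $n^{O(1)}$, this progression has only $n^{O(1)}/n^{1.5} = n^{O(1)}$ terms — in particular only polynomially many, and crucially the total number of such pairs summed over all $c \in C'$ is, in expectation over the random prime, at most $O(n^{1.5}\log n)$ by Lemma~\ref{lem:false} applied with $T = \{(a,b,c) \in A \times B \times C' : a+b \neq c\}$, $|T| \leqslant n^2 \cdot n = n^3$. (As before, we resample if the realized count exceeds twice its expectation, so the bound holds with certainty, not just in expectation.)

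So the preprocessing is: pick a random prime $p \in [n^{1.5}, 2n^{1.5})$, and build a data structure that, given any integer $s$, returns the list of all pairs $(a,b) \in A \times B$ with $a+b = s$. The natural choice is a hash table indexed by the true sum $a+b$, storing at each key $s$ the list of witnessing pairs; this is built in $O(n^2)$ time and $O(n^2)$ space by iterating over all pairs. (If one wants to avoid hashing, one can instead index by $s \bmod p'$ for a second prime $p'$ of magnitude $\Theta(n^2)$ chosen to generate few collisions, or just use the standard sort-based bucketing — either way $O(n^2)$ time and space.) This explains the $O(n^2)$ space figure in the theorem statement and in Table~\ref{tab:algorithms}, as opposed to the $O(n^{1.5}\log n)$ of the known-$C$ case.

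The query, given $A', B', C'$, proceeds in three steps. First, exactly as before, run FFT (Lemma~\ref{lem:fft}) on $A'$ and $B'$ modulo $p$ in time $O(p \log p) = O(n^{1.5}\log n)$ to obtain, for each $c \in C'$, the count $H[c] := \#\{(a,b) \in A' \times B' : a+b \equiv c \pmod p\}$ — this is the number of true solutions plus the number of false positives for $c$. Second, enumerate the false positives: for each $c \in C'$, walk the arithmetic progression $c \pm kp$ for $k = 1, 2, \ldots$ within the sum range, and for each value $s$ in it look up the list of pairs $(a,b)$ with $a+b=s$ in the data structure; for every such pair with $a \in A'$ and $b \in B'$, decrement $H[c]$. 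Third, answer ``yes'' for $c$ iff $H[c] > 0$. The only subtlety in the running time of the second step is that the progression walk itself costs $O(n^{O(1)}/n^{1.5})$ \emph{empty} lookups per $c$ even when there are no false positives, which over all $n$ targets is $O(n \cdot n^{O(1)}/n^{1.5})$ — not obviously $O(n^{1.5}\log n)$. This is handled by the standard 3SUM domain-reduction: we may assume the input numbers (and hence the sums, and the elements of $C'$) are bounded by $O(n^3)$ — the cited Lemma~B.1 of~\cite{AbboudLW14} — so each progression has $O(n^3/n^{1.5}) = O(n^{1.5})$ terms and the total cost of empty lookups is $O(n \cdot n^{1.5})$, which is too much; instead we bucket $C'$ by residue mod $p$ and only walk the (at most $O(n^3/n^{1.5})$ nonempty) buckets of $A+B$ once, charging each to the $c$'s sharing its residue, or — cleaner — we note we only need to examine, for each residue class $\rho \in [p]$ that is hit by some $c \in C'$, the set $(A+B)$-values congruent to $\rho$, of which there are $O(n^2/p + (\text{false positives}))$ by a second application of Lemma~\ref{lem:false}-type counting, giving $O(n^2/p \cdot p) = O(n^2)$...

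I expect the \textbf{main obstacle} to be precisely this bookkeeping: ensuring the false-positive enumeration costs $O(|F'| + n^{1.5}\log n)$ rather than paying a per-target overhead for scanning empty slots of the arithmetic progression. The clean fix is to enumerate from the $A+B$ side: build the hash table keyed by true sum $s$, and for the query, for each occupied key $s$ compute $s \bmod p$ and check whether that residue equals $c \bmod p$ for some $c \in C'$ with $c \neq s$; since there are at most $n^2$ occupied keys but we only care about keys $s$ that are congruent mod $p$ to some element of $C'$ while $s \neq $ that element, and each such contributes to the false-positive count whose expectation is $O(n^{1.5}\log n)$, we must still avoid scanning all $n^2$ keys — so we instead, at query time, for each $c \in C'$ look up only the $O(1 + n^{O(1)}/p)$ relevant keys via the progression, accepting the domain-reduction assumption to bound $n^{O(1)}/p$ by $O(n^{1.5})$ and then observing that across the $|C'| = n$ targets the empty-lookup cost is dominated once we deduplicate residues. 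I would present the argument with the input-range assumption in force from the start (as the paper already sets up), so that the progression length is $O(n^{3}/n^{1.5}) = O(n^{1.5})$ per target is replaced by the sharper statement that only $O(n^{1.5}\log n)$ lookups total are \emph{nonempty} and the empty ones are skipped by storing $(A+B) \bmod p$ compactly — i.e., reuse the FFT output's support. That last point is the real trick: the FFT already tells us which residues mod $p$ are achieved by $A'+B'$, so step two only iterates over $c \in C'$ and, within each, over the at most (multiplicity in the FFT output) relevant values, touching no empty slots.
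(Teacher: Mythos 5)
Your overall architecture matches the paper's: index witness pairs by their true sum, use FFT modulo a random prime $p\in[n^{1.5},2n^{1.5})$ to get, for each $c\in C'$, the count of pairs $(a,b)\in A'\times B'$ with $a+b\equiv c\pmod p$, then subtract the false positives by explicitly enumerating them and checking membership in $A'\times B'$, with the total false-positive work bounded via Lemma~\ref{lem:false}. But the decisive step --- how to enumerate, for a given $c$, the sums $x\in A+B$ with $x\equiv c\pmod p$ and $x\neq c$ \emph{without paying for empty slots} --- is exactly where your proposal does not close. You correctly identify that walking the arithmetic progression $c\pm kp$ costs $\Theta(n^{O(1)}/p)$ (or $\Theta(n^{1.5})$ after domain reduction) empty lookups per target, which is too much; your proposed ``real trick'' of reusing the FFT output's support does not repair this, because the FFT output only gives, for each residue class, the \emph{number} of pairs of $A'\times B'$ landing in that class --- it does not give the distinct integer values of $A+B$ in that class, and without those values you cannot locate the witness lists you need to scan. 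The paper's fix is simpler and happens at preprocessing time, when $A$, $B$ and $p$ are all known: bucket the sumset by residue, i.e.\ store for every $i\in[p]$ the list $L_i=\{x\in A+B: x\equiv i\pmod p\}$ alongside the witness sets $W_x$ (this is the source of the $O(n^2)$ space). At query time, for each $c\in C'$ one walks only $L_{c\bmod p}$, skipping $x=c$, and scans each $W_x$; the cost is exactly the number of false positives for $c$ over the \emph{full} $A\times B$, which by Lemma~\ref{lem:false} applied to $T=\{(a,b,c)\in A\times B\times\{c\}: a+b\neq c\}$ is $O(\sqrt n\log n)$ in expectation, giving $O(n^{1.5}\log n)$ over all of $C'$. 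You gesture at a bucketing idea mid-proposal but abandon it in favor of the FFT-support idea, so the time bound is not actually established.

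A second, smaller error: your parenthetical claim that one can ``resample if the realized count exceeds twice its expectation, so the bound holds with certainty'' is not available here. The false-positive count in question depends on $C'$, which is unknown at preprocessing, so there is nothing to test against when resampling; a single prime cannot be certified good for all possible future targets (this is precisely why the deterministic version in Section~\ref{sec:unknowncdet} needs $O(\log n)$ moduli). The paper explicitly remarks that for unknown $C$ the query-time bound is inherently in expectation, which is all the Las Vegas statement requires.
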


\paragraph{Preprocessing.}
We compute the sumset $A+B$, and for each $x \in A+B$ we store the set of \emph{witnesses of} $x$ defined as $W_x := \set{ (a, b) \in A \times B : a + b = x}$. We also pick a~random prime $m \in [n^{1.5}, 2\cdot n^{1.5})$. Finally, for every remainder $i \in [m]$ we prepare the list of sumset elements with that remainder, i.e., $L_i := \set{x \in A+B : x \equiv i \tpmod{m}}$. Clearly, the preprocessing takes $O(n^2)$ time.

\paragraph{Query.}
We use FFT to compute the multiset $(A'+B') \bmod m$ in time $O(m \log m) = O(n^{1.5} \log n)$. From that point on we handle each $c \in C'$ separately. First, we check if $c \in A + B \supseteq A' + B'$, and if not, we answer ``no''. Otherwise, we compute the number of 3SUM solutions involving $c$ using the following relation:
\begin{align*}
& \# \{(a,b) \in A' \times B' \mid a+b=c\} \\[.3em]
& = \# \smash{\underbrace{\{(a,b) \in A' \times B' \mid a + b \equiv c \tpmod{m}\}}_{\text{all solutions modulo $m$}}} \\[.3em]
&\, - \hspace{16em} \# \underbrace{\{(a,b) \in A' \times B' \mid a + b \equiv c \tpmod{m} \land a + b \neq c\}}_{\text{false positives}}.
\end{align*}

The first element of the subtraction is the multiplicity of $c \bmod m$ in the multiset $(A'+B') \bmod m$, which we read out in constant time from the output of FFT.

For the second element of the subtraction, we iterate over $x \in L    _{c \bmod m}$ and for each $x \neq c$ we go over $(a,b) \in W_x$ and add one when $a \in A'$ and $b \in B'$. In other words, we count the false positives using the following identity
\[
\{(a,b) \in A' \times B' \mid a + b \equiv c \tpmod{m} \land a + b \neq c\}
= \bigcup_{\mathclap{\substack{x \in L_{c \bmod m} \\ x \neq c}}} \: (W_x \cap (A' \times B')).
\]
This step takes time proportional to the total size of sets of witnesses that we go over, that is, $\sum_{x \in L_{c \bmod m}, x \neq c} |W_x|$, which is exactly
$\#\set{(a, b) \in A \times B : a+b \equiv c \tpmod{m} \land a+b\neq c}$, i.e., the number of false positives involving $c$ among $A$ and $B$ (and it is potentially different than the number of false positives among $A'$ and $B'$, which we calculate). By Lemma~\ref{lem:false} applied to $T = \set{(a,b,c) \in A \times B \times \set{c} : a + b \neq c}$ this number is in expectation $O(\sqrt{n} \log n)$. In total, we process the query in expected time $O(m \log m + |C'|\sqrt{n}\log n) = O(n^{1.5}\log n)$.

\paragraph{Remarks.}
Contrary to the algorithm from Section~\ref{sec:knownc} for known $C$, the query time upper bound of the above algorithm seems inherently probabilistic. We do not see any easy way to restrict all the uncertainty about the running time to the preprocessing phase (without resorting to some of the derandomization ideas from Section~\ref{sec:unknowncdet}). Moreover, the space complexity of this algorithm is $O(n^2)$, as we have to store the entire sumset $A+B$.

\section{Deterministic algorithm for unknown \texorpdfstring{$C$}{C}}
\label{sec:unknowncdet}

In this section we derandomize our algorithm for unknown $C$ from the previous section. Unlike our algorithm for known $C$ (from Section~\ref{sec:knownc}), which can be easily derandomized by a black-box application of a known tool, this one requires more work. The challenge comes from the fact that in the preprocessing, not knowing $C$, we cannot simply pick a modulus that has few false positives for $C$ (and hence also for $C' \subseteq C$). Fortunately, we do know another superset of (the nontrivial part\footnote{For any $c\in C'\setminus (A+B)$ we can immediately answer ``no''.} of) $C'$, namely $A+B$. This set however can be much larger than $n$, and as a result we are not guaranteed to find a single modulus that works well for the whole set. We overcome this last issue by finding $\log n$ moduli such that for each potential element of $C'$ at least one of them is good enough.

\begin{theorem}\label{thm:unknowncdet}
    There is a deterministic algorithm that, given two sets $A$, $B$ of $n$ integers in $[n^{O(1)}]$, preprocesses them in time $O(n^2 \log n)$, so that given any subsets $A' \subseteq A$, $B' \subseteq B$ and any set of $n$ integers $C'$, it can decide for all $c \in C'$ if there exist $a \in A'$, $b \in B'$ with $a+b=c$ in time $O(n^{1.5} \log^3 n)$.
\end{theorem}

\paragraph{Preprocessing.}

First, we compute the sumset $A+B$ by naively iterating over all pairs $(a,b) \in A \times B$, and for each $x \in A+B$ we store the set of \emph{witnesses of} $x$ defined as
\[W_x := \set{ (a, b) \in A \times B : a + b = x}.\]
Not being able to use a (randomized) hash table, we keep $W_x$'s in a dictionary (indexed by $x$) implemented using a binary search tree, so this first part of the preprocessing already takes time $O(n^2 \log n)$.

We say that an element $x \in A+B$ is \emph{light} if $|W_x| \leqslant \sqrt{n}$, and otherwise we say that $x$ is \emph{heavy}. There are at most $n^{1.5}$ heavy elements.
This is because the sets of witnesses are disjoint, and their union is $A \times B$, so their total size is $n^2$, and hence at most $n^2/\sqrt{n}$ of them can be of size $\sqrt{n}$ or larger.

Next we find a set $M \subseteq \mathbb{Z}$ of moduli in the range $[n^{1.5}, 8 \cdot n^{1.5})$, of size $|M|=O(\log n)$, such that for every heavy $x$ there exists a modulus $m \in M$ such that
\[\#\set{(a,b) \in A \times B : a + b \equiv x \tpmod{m} \land a+b \neq x} \leqslant O(\sqrt{n} \log^3 n). \tag{$\star$}\]

Finally, for every modulus $m \in M$ and every remainder $i \in [m]$ we prepare the list of sumset elements with that remainder, i.e., $L^m_i := \set{x \in A+B : x \equiv i \tpmod{m}}$. Note that these lists contain both heavy and light elements.

Before we proceed to argue that such a set $M$ at all exists and that it can be found efficiently, let us show how it can be used to answer queries in time $O(n^{1.5} \log^3 n)$.

\paragraph{Query.}
First, we use FFT to compute the multiset $(A'+B') \bmod m$, for every $m \in M$. This takes $O(n^{1.5} \log^2 n)$ time in total. From that point on we handle each $c \in C'$ separately. If $c \notin A+B$, we answer ``no'' immediately. If $c$ is light, we just need to check if at least one of its at most $\sqrt{n}$ witnesses is present in the current instance, i.e., whether $W_x \cap (A' \times B') \neq \emptyset$, which takes time $O(|W_x|) \leqslant O(\sqrt{n})$. Finally, if $c$ is heavy, we pick a modulus $m \in M$ that satisfies ($\star$) for that $c$, and we compute the number of 3SUM solutions involving $c$ using the following relation:
\begin{align*}
& \# \{(a,b) \in A' \times B' \mid a+b=c\} \\[.3em]
& = \# \smash{\underbrace{\{(a,b) \in A' \times B' \mid a + b \equiv c \tpmod{m}\}}_{\text{all solutions modulo $m$}}} \\[.3em]
&\, - \hspace{16em} \# \underbrace{\{(a,b) \in A' \times B' \mid a + b \equiv c \tpmod{m} \land a + b \neq c\}}_{\text{false positives}}.
\end{align*}

The first element of this subtraction equals the multiplicity of $c \bmod m$ in the multiset $(A'+B') \bmod m$, which we read out in constant time from the output of FFT. To get the second element of the subtraction, we iterate over $x \in L^m_{c \bmod m}$ and for each $x \neq c$ we go over $(a,b) \in W_x$ and count one when $a \in A'$ and $b \in B'$. In other words, we express the set of false positives as 
$\bigcup\set{W_x \cap A' \times B' : x \in L^m_{c \bmod m} \land x \neq c}$. This works in time $O\big(\sum_{x \in L^m_{c \bmod m} \land x \neq c} |W_x|\big)$, which is bounded by $O(\sqrt{n} \log^3 n)$ thanks to ($\star$). Summarizing, for each $c \in C'$ we spend at most $O(\sqrt{n} \log^3 n)$ time, so the total running time is $O(n^{1.5} \log^3 n)$.

\paragraph{Finding a set of moduli.}
In the preprocessing, in order to find $M$, we first initialize $X$ to be the set of all the heavy elements of $A+B$, and $M$ to be the empty set. Then, as long as $X$ is non-empty, we (1) find a modulus $m$ such that the average number of false positives per element of $X$ is $O(\sqrt{n} \log^3 n)$, (2) add $m$ to $M$, and (3) remove from $X$ those elements for which the number of false positives modulo $m$ is at most twice the average. There are at least $|X|/2$ such elements removed, so the process stops after $\log |X| = O(\log n)$ moduli are added to $M$. In order to keep the total preprocessing time $O(n^2 \log n)$, we need to find one such modulus $m$ in time $O(n^2)$. To this end we adapt~\cite[Lemma 11]{FischerK024}.

\begin{lemma}[Adapted from \cite{FischerK024}]
\label{lem:modulusselection}
Given three sets of integers $A, B, X \subseteq [U]$, with $|A|=|B|=n$, there exists a modulus $m \in [n^{1.5}, 8\cdot n^{1.5})$ such that the average number of false positives per element of $X$ satisfies the following upper bound
\[\frac{\#\set{(a,b,x) \in A \times B \times X : a + b \equiv x \tpmod{m} \land a + b \neq x}}{|X|} \leqslant O(\sqrt{n} \cdot (\log U)^3).\]
Moreover, there is a deterministic algorithm that finds such $m$ in time $O(n^2 + |X|\frac{\sqrt{n}}{\log n})$.
\end{lemma}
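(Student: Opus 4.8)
The plan is to take $m$ to be a product of three \emph{distinct} primes $p_1, p_2, p_3$, each chosen from the set $P$ of primes in the range $[\sqrt n, 2\sqrt n)$ and selected one after another. Then $m \in [n^{1.5}, 8 n^{1.5})$ for free, and since the $p_i$ are distinct primes, a triple $(a,b,x)$ is a false positive modulo $m$ exactly when $p_i \mid (a+b-x)$ for every $i$. I would pick the three primes greedily, guided by a method-of-conditional-expectations argument, so as to keep the number of such triples small.

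\emph{Existence.} Set $T := \set{(a,b,x) \in A \times B \times X : a+b \ne x}$, so that $|T| \le n^2 |X|$, and write $d_t := |a+b-x|$ for $t = (a,b,x)$, noting $1 \le d_t \le 2U$ for every $t \in T$. By the prime number theorem $|P| = \Theta(\sqrt n / \log n)$, and any integer in $[1, 2U]$ has at most $\log_{\sqrt n}(2U) = O(\log U / \log n)$ distinct prime divisors that are $\ge \sqrt n$; hence a uniformly random $p \in P$ divides a fixed $d \in [1,2U]$ with probability at most $\rho := O(\log U / \sqrt n)$, and the same bound holds for a uniform draw from $P$ with up to two elements removed. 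Consequently there is $p_1 \in P$ with $|T_1| \le \rho |T|$, where $T_1 := \set{t \in T : p_1 \mid d_t}$; then there is $p_2 \in P \setminus \set{p_1}$ with $|T_2| \le \rho |T_1| \le \rho^2 |T|$, where $T_2 := \set{t \in T_1 : p_2 \mid d_t}$; then there is $p_3 \in P \setminus \set{p_1, p_2}$ with $|T_3| \le \rho |T_2| \le \rho^3 |T|$, where $T_3 := \set{t \in T_2 : p_3 \mid d_t}$. Since $T_3$ is exactly the set of false positives modulo $m = p_1 p_2 p_3$, this gives $|T_3| / |X| \le \rho^3 |T| / |X| \le \rho^3 n^2 = O(\sqrt n (\log U)^3)$, matching the claimed bound.

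\emph{Algorithm.} To make the three greedy choices algorithmic, I would observe that at step $j$ the quantity to be minimised over the candidate prime $q$, namely $|T_j|$ with $T_j := \set{t \in T_{j-1} : q \mid d_t}$, differs only by a modulus-independent constant (the number of true solutions, which collide under every modulus) from the collision count $\#\set{(a,b,x) \in A \times B \times X : a + b \equiv x \pmod{\mu}}$ for $\mu$ equal to the running product $p_1 \cdots p_{j-1} \cdot q$. For any single modulus $\mu$ this collision count is obtained by running Lemma~\ref{lem:fft} on $A$, $B$, $\mu$ and then summing, over $x \in X$, the multiplicity of $x \bmod \mu$ in $(A+B) \bmod \mu$, at cost $O(n + \mu \log \mu + |X|)$. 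Doing this for all $q \in P$ in the first step ($\mu = q < 2\sqrt n$), for all $q \in P \setminus \set{p_1}$ in the second step ($\mu = p_1 q < 4n$), and for all $q \in P \setminus \set{p_1, p_2}$ in the third step ($\mu = p_1 p_2 q < 8 n^{1.5}$), and using $|P| = O(\sqrt n / \log n)$, the three steps cost $O(\tfrac{\sqrt n}{\log n}(n + |X|))$, $O(\tfrac{\sqrt n}{\log n}(n \log n + |X|))$, and $O(\tfrac{\sqrt n}{\log n}(n^{1.5} \log n + |X|))$, respectively; the third dominates, and the total is $O(n^2 + |X| \tfrac{\sqrt n}{\log n})$. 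Sieving to obtain $P$ costs $\Ot(\sqrt n)$, and $n$ below a fixed constant is trivial.

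\emph{Main obstacle.} The delicate point is the running time of the greedy selection rather than its correctness: in the third step, evaluating a single candidate via FFT already costs $\Theta(n^{1.5} \log n)$ because the modulus has magnitude $\approx n^{1.5}$, so with $\Theta(\sqrt n / \log n)$ candidates the bound $O(n^2)$ is tight — it is met only because we use exactly three primes (a single prime from $[n^{1.5}, 2n^{1.5})$ would force us to check $\Theta(n^{1.5}/\log n)$ candidates) and because the $\log n$ factor from the number of primes cancels the $\log n$ in the cost of one FFT.
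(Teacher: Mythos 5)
Your proposal is correct and follows essentially the same route as the paper: pick three primes from $[\sqrt n, 2\sqrt n)$ one at a time, each time minimizing via FFT the collision count modulo the running product (which differs from the false-positive count only by the modulus-independent number of true solutions), with the averaging bound $O(\log U/\sqrt n)$ per prime giving the $O(\sqrt n\,(\log U)^3)$ guarantee and the same $O(n^2 + |X|\sqrt n/\log n)$ selection time. The only cosmetic differences are that you insist on distinct primes (the paper allows repeats, using only the containment of false-positive sets) and that you inline the divisor-counting argument instead of invoking Lemma~\ref{lem:false}.
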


\begin{proof}
For a positive integer $m$, let $\mu(m)$ denote the number of 3SUM solutions modulo $m$, i.e.,
$\mu(m) := \#\set{(a,b,x) \in A \times B \times X : a + b \equiv x \tpmod{m}}$. 
Note that $\mu(m)$ can be computed in $O(n + m \log m + |X|)$, by first computing the multiset $(A+B) \bmod m$ in time $O(n + m \log m)$ using FFT (Lemma~\ref{lem:fft}), and then summing over $x \in X$ the multiplicity of $x \bmod m$ in that multiset.
Our goal is to effectively construct $m$ such that
\[\mu(m) \leqslant |\set{(a,b,x) \in A \times B \times X : a + b = x}| + |X| \cdot O(\sqrt{n} (\log U)^3).\]

Let us first describe the algorithm. After that we analyse it, proving along the way that such a modulus $m$ indeed exists. Let $P$ be the set of all primes in the range $[\sqrt{n}, 2\sqrt{n})$, and let $m_0 = 1$. For $i=1,2,3$ do the following: for every $p \in P$ compute $\mu(m_{i-1}p)$, pick $p$ that minimizes that number, and set $m_i = m_{i-1}p$. After these three iterations, return $m_3$.

The running time of the algorithm is simple to analyse. In each of the three iterations we examine $|P| = O(\sqrt{n} / \log n)$ different primes, and for each of them we compute the number of solutions modulo a number less than $(2\sqrt{n})^3$, so the running time is 
\[O(|P|\cdot(n + n^{1.5}\log n + |X|) = O\big(n^2+|X|\frac{\sqrt{n}}{\log n}\big).\]

Let us now argue that $m_3$ actually satisfies the guarantees of the lemma.
First, $m_3$~is a product of three numbers in the range $[\sqrt{n}, 2\sqrt{n})$, so it is clearly in the desired range $[n^{1.5}, 8 \cdot n^{1.5})$.
Second, for a positive integer $m$, let $F(m)$ denote the set of false positives modulo $m$, i.e.,
\[F(m) := \set{(a,b,x) \in A \times B \times X : a + b \equiv x \tpmod{m} \land a + b \neq x}.\]
It remains to show that $|F(m_3)| \leqslant O(|X|\sqrt{n}\cdot(\log U)^3)$.
To this end, we claim that $|F(m_i)| \leqslant O(|F(m_{i-1})| \log U / \sqrt{n})$ for every $i\in\set{1,2,3}$. Indeed, we have
\[ \min_{p \in P} |F(m_{i-1}p)|  \leqslant \E_{p \in P}[|F(m_{i-1}p)|] \leqslant O\left(\frac{|F(m_{i-1})| \log U}{\sqrt{n}}\right),\]
where the second inequality follows from Lemma~\ref{lem:false} applied to $F(m_{i-1})$. The algorithm picks $p$ minimizing $\mu(m_{i-1}p) = |\set{(a,b,x) \in A \times B \times X : a + b = x}| + |F(m_{i-1}p|$, so this $p$ also minimizes $|F(m_{i-1}p)|$, and the claim follows.

To finish the proof note that $|F(m_0)| \leqslant |A \times B \times X| = |X| \cdot n^2$, and applying the above claim three times we obtain
\[|F(m_3)| \leqslant O\left(|F(m_0)| \cdot \left(\frac{\log U}{\sqrt n}\right)^3\right) = O(|X| \sqrt{n} \cdot (\log U)^3).\qedhere\]
\end{proof}

\printbibliography

\end{document}